\newcommand{\ignore}[1]{}
\newcommand{\proc}[1]{\textsc{#1}}
\newcommand{\textrel}[1]{\textsc{#1}}
\newcommand{\IntAbs}[1]{\textsf{IntAbs}}
\newcommand{\Name}[1]{\textsf{IntAbs}}
\newtheorem{theorem}{Theorem}
\algrenewcommand\algorithmicindent{1.0em}%
\newcommand\tfunc[0]{\textsc{tfunc}}
\definecolor{mygreen}{rgb}{0,0.6,0}
\definecolor{mygray}{rgb}{0.5,0.5,0.5}
\definecolor{mymauve}{rgb}{0.58,0,0.82}
\definecolor{dkgreen}{rgb}{0,0.6,0}
\definecolor{darkblue}{rgb}{0.0, 0.0, 0.55}
\let\OldStatex\Statex
\renewcommand{\Statex}[1][3]{%
  \setlength\@tempdima{\algorithmicindent}%
  \OldStatex\hskip\dimexpr#1\@tempdima\relax}
\newcommand\fs@betterruled{%
  \def\@fs@cfont{\bfseries}\let\@fs@capt\floatc@ruled
  \def\@fs@pre{\vspace*{5pt}\hrule height.8pt depth0pt \kern2pt}%
  \def\@fs@post{\kern2pt\hrule\relax}%
  \def\@fs@mid{\kern2pt\hrule\kern2pt}%
  \let\@fs@iftopcapt\iftrue}
\begin{document}
%
\title{Modular Verification of Interrupt-Driven Software}

\author{
\IEEEauthorblockN{Chungha Sung}
\IEEEauthorblockA{
University of Southern California\\
Los Angeles, CA, USA
}
\and
\IEEEauthorblockN{Markus Kusano}
\IEEEauthorblockA{
Virginia Tech\\
Blacksburg, VA, USA
}
\and
\IEEEauthorblockN{Chao Wang}
\IEEEauthorblockA{
University of Southern California\\
Los Angeles, CA, USA
}
}

\maketitle

\begin{abstract}

Interrupts have been widely used in safety-critical computer systems
to handle outside stimuli and interact with the hardware, but
reasoning about interrupt-driven software remains a difficult task.
Although a number of static verification techniques have been proposed
for interrupt-driven software, they often rely on constructing a
monolithic verification model.  Furthermore, they do not precisely
capture the complete execution semantics of interrupts such as nested
invocations of interrupt handlers.
To overcome these limitations, we propose an \emph{abstract
interpretation} framework for static verification of interrupt-driven
software that
first analyzes each interrupt handler in isolation as if it
were a sequential program, and
then propagates the result to other interrupt handlers.  This
\emph{iterative} process continues until results from all interrupt handlers
reach a fixed point. Since our method never constructs the global model, 
it avoids the up-front blowup in model construction that hampers existing, non-modular,
verification techniques.
We have evaluated our method on 35 interrupt-driven applications with
a total of 22,541 lines of code.  Our results show the method
is able to quickly and more accurately analyze the behavior of interrupts.

\end{abstract}

%
\IEEEpeerreviewmaketitle

\section{Introduction}

Interrupts have been widely used in safety-critical embedded computing
systems, information processing systems, and mobile systems to interact
with hardware and respond to outside stimuli in a timely manner.
However, since interrupts may arrive non-deterministically at any
moment to preempt the normal computation, they are difficult for
developers to reason about.  The situation is further exacerbated by
the fact that interrupts often have different priority levels:
high-priority interrupts may preempt low-priority interrupts but not
vice versa, and interrupt handlers may be executed in a nested
fashion.  Overall, methods and tools for accurately modeling the
semantics of interrupt-driven software are still lacking.

Broadly speaking, existing techniques for analyzing interrupts fall
into two categories.
The first category consists of techniques based on
testing~\cite{Regehr05, Higashi10}, which rely on executing the
program under various interrupt invocation sequences.  Since it is
often practically infeasible to cover all combinations of interrupt
invocations, testing will miss important bugs. 
The second category consists of static verification techniques such as
model checking~\cite{Bucur11, Schlich09, WuWCDW13, Vortler15,
Kroening15}, which rely on constructing and analyzing a formal model.
During the modeling process, interrupt-related behaviors such as
preemption are considered.  Unfortunately, existing tools such as
iCBMC~\cite{Kroening15} need to bound the execution depth to remain
efficient, which means shallow bugs can be detected quickly, but
these tools cannot prove the absence of bugs.

In this paper, we propose a static verification tool geared
toward proving the absence of bugs based on \emph{abstract
interpretation}~\cite{Cousot77}.  The main advantage of abstract
interpretation is the sound approximation of complex constructs such
as loops, recursions and numerical computations. However, although
abstract interpretation techniques have been successfully applied to
sequential~\cite{CousotCFMMMR05} and multithreaded
software~\cite{Mine14}, they have not been able to precisely model the
semantics of interrupt-driven software.

At the high level, interrupts share many similarities with threads,
e.g., both interrupt handlers and thread routines may be regarded as
sequential programs communicating with others via the shared memory.  However,
there are major differences in the way they interleave.  For example,
in most of the existing verification tools, threads are allowed to
freely preempt each other's execution.  In contrast, interrupts often
have various levels of priority: high-priority interrupts can preempt
low-priority interrupts but not vice versa.  Furthermore, interrupts
with the same level of priority cannot preempt each other. Thus,
 the behavior manifested by interrupts has to be viewed as a
subset of the behavior manifested by threads.

To accurately analyze the behavior of interrupts, we
develop \Name{}, an \emph{iterative} abstract interpretation
framework for interrupt-driven software.
That is, the framework always analyzes each
interrupt handler in isolation before propagating the result to other
interrupt handlers and the \emph{per interrupt} analysis is iterated
until results on all interrupt handlers stabilize, i.e., they reach
a \emph{fixed point}.  Thus, in contrast to traditional techniques, it
never constructs the monolithic verification model that often causes
exponential blowup up front. Due to this reason, our method is
practically more efficient than these traditional verification techniques.

The \Name{} framework also differs from prior techniques for statically
analyzing interrupt-driven software, such as the source-to-source
transformation-based testing approach proposed by
Regehr~\cite{Regehr07}, the sequentialization 
approach used by Wu et al.~\cite{WuCMDW16}, and the model
checking technique implemented in iCBMC~\cite{Kroening15}.  For
example, none of these existing techniques can soundly handle infinite
loops, nested invocations of interrupts, or prove the absence of bugs. 
Although some prior abstract interpretation techniques~\cite{Mine17} 
over-approximate of the interrupt behavior, they are either
non-modular or too inaccurate, e.g., by allowing too many
infeasible \emph{store-to-load} data flows between interrupts.  
In contrast, our approach precisely models the preemptive scheduling
of interrupts to identify  apparently-infeasible data
flows. As shown in Fig.~\ref{fig:overview}, by pruning away these infeasible
data flows, we can drastically improve the accuracy of the overall
analysis.

\Name{} provides not only a more accurate modeling of the
interrupt semantics but also a more efficient abstract
interpretation framework.  We have implemented \Name{} in a static
analysis tool for C/C++ programs, which uses Clang/LLVM~\cite{Adve03}
as the front-end, Apron \cite{Jeannet09} for implementing the
numerical abstract domains, and $\mu Z$~\cite{Hoder11} for checking
the feasibility of data flows between interrupts.  We evaluated \Name{}
on 35 interrupt-driven applications with a total of 22,541
lines of C code.  Our experimental results show that
\Name{} can efficiently as well as more accurately analyze the 
behavior of interrupts by removing a large number of infeasible
data flows between interrupts.

In summary, the main contributions of our work are:

\begin{itemize}

\item 
A new abstract interpretation framework for
conducting static verification of interrupt-driven programs.

\item 
A method for soundly and efficiently identifying and pruning
infeasible data flows between interrupts.

\item 
The implementation and experimental evaluation on a large number of
benchmark programs to demonstrate the effectiveness of the proposed
techniques.

\end{itemize}

The remainder of this paper is organized as follows.  We first
motivate our approach in Section~\ref{sec:motivation} by comparing it
with testing, model checking, and abstract interpretation 
tools designed for threads.  Then, we provide the
technical background on interrupt modeling and abstract interpretation
in Section~\ref{sec:prelim}.  Next, we present our new method for
checking the feasibility of data flows between interrupts in
Section~\ref{sec:contraint}, followed by our method for integrating
the feasibility checking with abstract interpretation in
Section~\ref{sec:prior-analysis}.  We present our experimental
evaluation in Section~\ref{sec:experiment}.  Finally, we review the
related work in Section~\ref{sec:relatedwork} and conclude in
Section~\ref{sec:conclusion}.

\begin{figure}[!t]
\centering
\includegraphics[width=\linewidth]{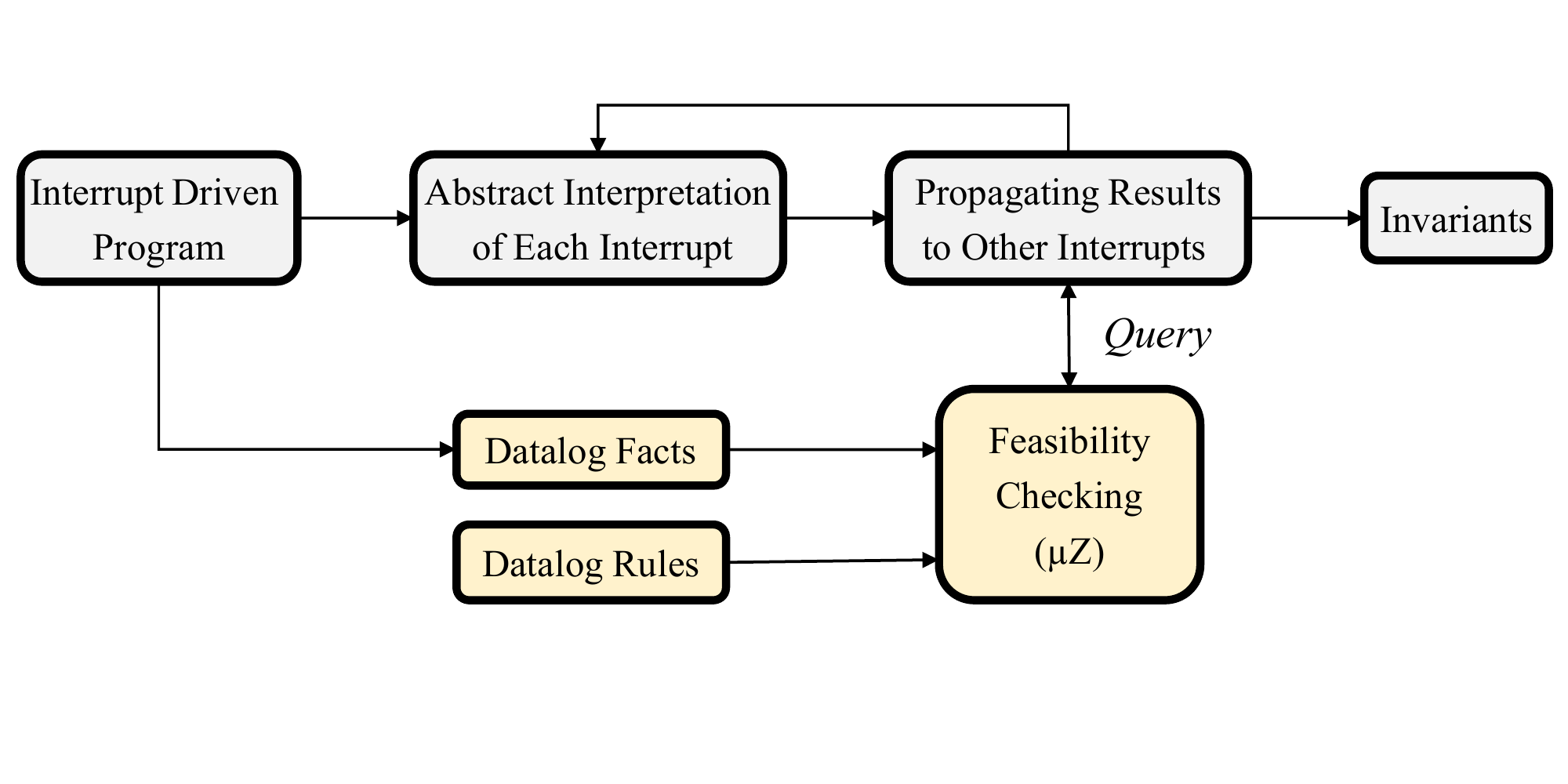}

\vspace{-6ex}
\caption{\Name{} -- iterative verification framework for interrupt-driven programs.}
\label{fig:overview}
\end{figure}

\section{Motivation}
\label{sec:motivation}

We first use examples to illustrate the problems of prior techniques
such as testing, model checking, and thread-modular abstract
interpretation.  Then, we explain how our method overcomes these
problems.

Consider the example program in Fig.~\ref{fig:mot0}, which has three
interrupts \texttt{irq\_H}, \texttt{irq\_L} and \texttt{irq\_M}.  The
suffix \texttt{H} is used to denote high priority, \texttt{L} for low
priority, and \texttt{M} for medium priority.  Interrupts with higher
priority levels may preempt interrupts with lower priority levels, but
not vice versa.  Inside the handlers, there are two
variables \texttt{x} and \texttt{y}, which are set to \texttt{0}
initially.  Among the three assertions, the first two may fail, while
the last one always holds.

\begin{figure}[!]
\vspace{1ex}
\centering
\hspace{0\linewidth}
  \centering
\begin{minipage}{\linewidth}
\centering
\framebox[.31\linewidth]{
\begin{minipage}{.31\linewidth}
{\scriptsize

~~\texttt{irq\_H() \{}

\medskip

~~~~~\texttt{...}

\medskip

~~~~~\texttt{assert(y==0);} 

~~\texttt{\}}
}
\end{minipage}
}
\framebox[.31\linewidth]{
\begin{minipage}{.31\linewidth}
{\scriptsize

~~\texttt{irq\_L() \{}

\medskip

~~~~~\texttt{x = 0;}

\medskip

~~~~~\texttt{assert(x==0);}

~~\texttt{\}}
}
\end{minipage}
}
\framebox[.31\linewidth]{
\begin{minipage}{.31\linewidth}
{\scriptsize

~~\texttt{irq\_M()} \{

\medskip

~~~~~\texttt{y = 1;}

~~~~~\texttt{x = 1;}

~~~~\texttt{assert(x==1);}

~~\}
}
\end{minipage}
}
\end{minipage}
\caption{An example program with three interrupt handlers and assertions.}
\label{fig:mot0}	
\end{figure}

\subsection{Testing}

Testing an interrupt-driven program requires the existence of
interrupt sequences, which must be generated a priori.  In
Fig.~\ref{fig:mot0}, for example, since the interrupt handlers have
different priority levels, we need to consider preemption while
creating the test sequences.  Since a high-priority interrupt handler
may preempt, at any time, the execution of a medium- or low-priority
interrupt handler, when \texttt{irq\_L} is executing, \texttt{irq\_H}
may be interleaved in between its instructions.

Fig.~\ref{fig:intSeq} shows four of the possible interrupt sequences
for the program.
Specifically, \emph{case0} is the sequential execution of the three
handler functions;
\emph{case1} shows that \texttt{irq\_H}
preempts \texttt{irq\_M}, followed by \texttt{irq\_L};
\emph{case2} is similar, except that \texttt{irq\_L} executes 
first and then is preempted by \texttt{irq\_H}, followed
by \texttt{irq\_M};  and 
\emph{case3} is the nested case where \texttt{irq\_L}
is preempted by \texttt{irq\_M} and then by \texttt{irq\_H}.

\begin{figure}
\centering
\includegraphics[width=0.675\linewidth]{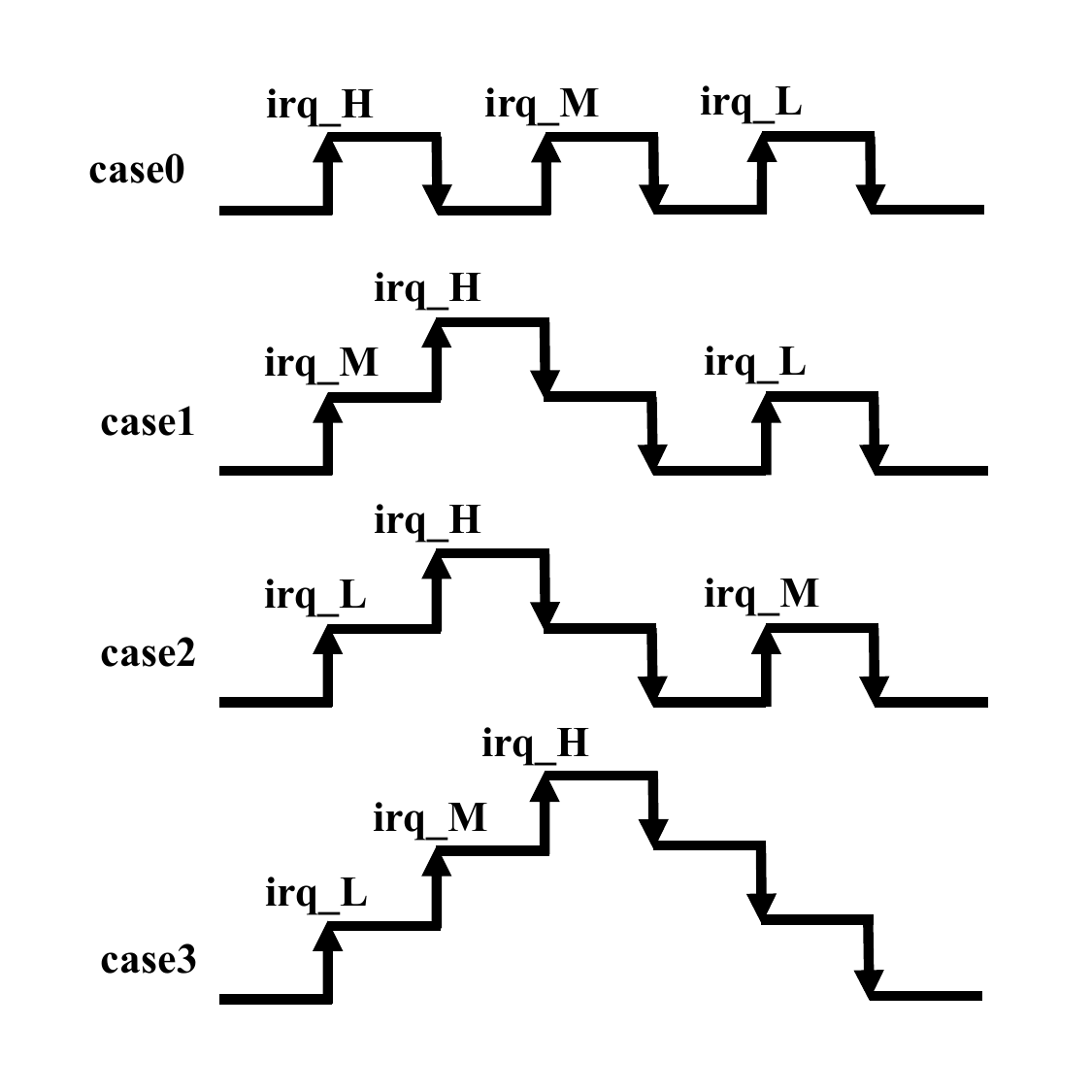}
\caption{Some possible interrupt sequences for the program in Fig.~\ref{fig:mot0}.}
\label{fig:intSeq}
\end{figure}

The main problem of testing is that there can be too many such interrupt
sequences to explore.  Even if we can somehow guarantee that each
interrupt handler is executed only once, the total number of test
sequences can be enormously large, even for small or medium-sized
programs.

\subsection{Model Checking}

Model checking tools such as CBMC~\cite{Clarke04b} may be used to search for erroneous
interrupt sequences, e.g., those leading to assertion violations.  For
instance, in the running example, all assertions hold under the
sequences \emph{case0} and \emph{case2} in Fig.~\ref{fig:intSeq}.
This is because, although \texttt{irq\_H} preempts \texttt{irq\_L},
they access different variables and thus do not affect the assertion
conditions, while \texttt{irq\_M} checks the value of \texttt{x} after
assigning 1 to \texttt{x}.

In \emph{case1}, however, the execution order of the three interrupt
handlers is different, thus leading to an assertion violation
inside \texttt{irq\_H}.  More specifically, \texttt{irq\_M} is
preempted by \texttt{irq\_H} at first.  Then, after both
completes, \texttt{irq\_L} is executed. So, the change of \texttt{y}
may affect the read of \texttt{y} in \texttt{irq\_H}, leading to the
violation.

Finally, in \emph{case3}, both of the first two assertions may be
violated, because the check of \texttt{x} in \texttt{irq\_L} and the
check of \texttt{y} in \texttt{irq\_H} can be affected by \texttt{irq\_M}'s own assignments 
of \texttt{x} and \texttt{y}.

Although bounded model checking can quickly find bugs, e.g., the assertion
violations in Fig.~\ref{fig:mot0}, the depth of the execution is often bounded, which means in practice,
tools such as iCBMC~\cite{Kroening15} cannot prove the absence of
bugs.

\subsection{Abstract Interpretation}

Abstract interpretation is a technique designed for
proving properties, e.g., assertions always hold.  Unfortunately,
existing methods based on abstract interpretation are mostly
designed for threads as opposed to interrupts.  Since threads interact
with each other more freely than interrupts, these methods are
essentially over-approximated analysis.  As such, they may still be
leveraged to prove properties in interrupt-driven programs, albeit in
a less accurate fashion.  That is, when they prove an assertion holds,
the assertion indeed holds; but when they cannot prove an assertion,
the result is inconclusive.

\begin{figure}
\centering
\includegraphics[width=0.75\linewidth]{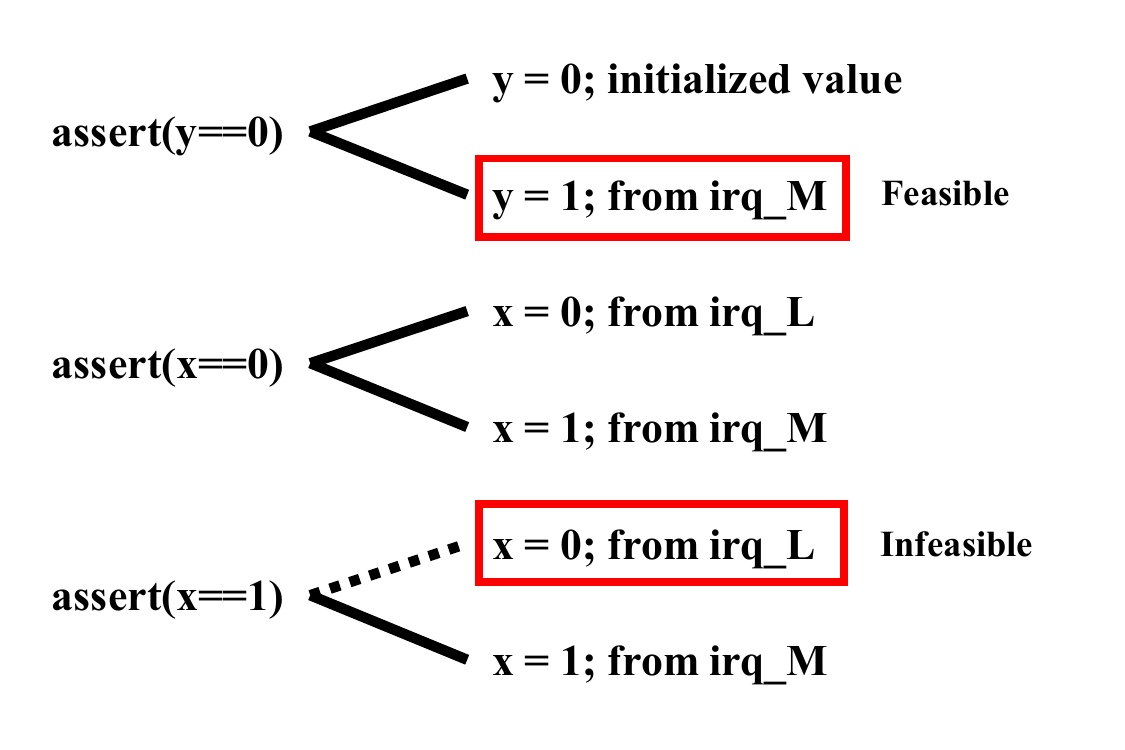}
\caption{Some possible \emph{store-to-load} data flows during abstract interpretation.}
\label{fig:ldstpairs}
\end{figure}

For the running example in Fig.~\ref{fig:mot0}, for instance, existing
abstract interpretation techniques such as
Min\'e~\cite{Mine14,Mine11}, designed for analyzing threads,
cannot prove any of the three assertions.  To see why, let us first
assume that interrupt handlers are thread routines.  During
thread-modular abstract interpretation, the verification procedure would first
gather all possible pairs of load and store instructions with respect
to the global variables, as shown in Fig.~\ref{fig:ldstpairs},
where each assertion has two possible loads.  Specifically, the load of \texttt{y}
in \texttt{irq\_H} corresponds to the initial value \texttt{0} and the
store in \texttt{irq\_M}.  The load of \texttt{x} in \texttt{irq\_L}
corresponds to the stores in \texttt{irq\_L} and \texttt{irq\_M}.  The
load of \texttt{x} in \texttt{irq\_M} corresponds to the stores
in \texttt{irq\_L} and \texttt{irq\_M}.

Since these existing methods~\cite{Mine14,Mine11} assume that all stores may
affect all loads, they would incorrectly report that all three
assertions may fail.  For example, it reports that the load of \texttt{x}
in \texttt{irq\_M} may (incorrectly) read from the store \texttt{x=0}
in \texttt{irq\_L} despite that \texttt{irq\_L} has a lower priority
and thus cannot preempt \texttt{irq\_M}.
In contrast, our new method can successfully prove the third
assertion.
Specifically, we model the behavior of interrupts with different
levels of priorities.  Due to the different priority levels,
certain \emph{store-to-load} data flows are no longer feasible, as
shown by the stores marked by red boxes in Fig.~\ref{fig:ldstpairs}:
these two stores have lower priority than the corresponding load in
the assertions.

\subsection{Abstract Interpretation for Interrupts}

Modeling the priority levels \emph{alone}, however, is not enough for
proving all assertions because even without preemption, a
low-priority interrupt may affect a high-priority interrupt.  Consider
the first red box in Fig.~\ref{fig:ldstpairs}.  Although \texttt{y=1}
from \texttt{irq\_M} cannot affect the load of \texttt{y}
in \texttt{irq\_H} through preemption, if \texttt{irq\_H} is invoked
after \texttt{irq\_M} ends, \texttt{y} can still get the
value \texttt{1}, thus leading to the assertion violation.  Therefore,
our new verification procedure has to consider all possible
sequential interleavings of the interrupt handlers as well.

Now, consider the program in Fig.~\ref{fig:mot1}, which has three
interrupt handlers \texttt{irq\_M}, \texttt{irq\_L}
and \texttt{irq\_H}.  In these handler functions, there are two global
variables \texttt{x} and \texttt{y}, which are set to \texttt{0}
initially.  Among the three assertions, the first two always hold,
whereas the last one may fail.  For ease of comprehension, we assume
the computer hardware running this program provides the \emph{sequentially consistent}
memory~\cite{Lamport78,ZhangKW15,KusanoW17}.
Note that \texttt{irq\_M} has two stores of \texttt{y}, one inside the
conditional branch and the other outside, and \texttt{irq\_H} has two
stores of \texttt{x}, one inside the conditional branch and the other
outside.

\begin{figure}[!]
\vspace{1ex}
\centering
\hspace{0\linewidth}
  \centering
\framebox[.9\linewidth]{
\begin{minipage}{.9\linewidth}
\centering
\begin{minipage}{.3\linewidth}
{\scriptsize

\texttt{irq\_M() \{}

\medskip

~~\texttt{if (...)}

~~~~~~\texttt{y = 0;}

\medskip

~~\texttt{y = 1;}

\medskip

~~\texttt{assert(x==1);}

\}
}
\end{minipage}
\begin{minipage}{.3\linewidth}
{\scriptsize

\texttt{irq\_L() \{}

\medskip

~~\texttt{...}

\medskip

~~\texttt{y = 1;}

~~\texttt{...}

\medskip

~~\texttt{assert(y==1);}

\}
}
\end{minipage}
\begin{minipage}{.3\linewidth}
{\scriptsize

\texttt{irq\_H() \{}

\medskip

~~\texttt{if (...)}

~~~~~~\texttt{x = 0;}

\medskip

~~\texttt{x = 1;}

\medskip

~~\texttt{assert(y==1);}

\}
}
\end{minipage}
\end{minipage}
}
\caption{An example program with three interrupt handlers, where the first two assertions alway hold but the last assertion may fail.}
\label{fig:mot1}	
\end{figure}

With prior thread-modular analysis~\cite{Mine11,Mine14,KusanoW16}, all
three assertions may fail because the store \texttt{y=0}
in \texttt{irq\_M} may be interleaved right before the assertions
in \texttt{irq\_L} and \texttt{irq\_H}.  Furthermore, the
store \texttt{x=0} in \texttt{irq\_H} executed before \texttt{irq\_M}
may lead to the violation of the assertion in \texttt{irq\_M}.
In contrast, with our precise modeling of the interrupt behavior, the
new method can prove that the first two assertions always hold.
Specifically, the assertion in \texttt{irq\_L} holds because, even if
it is preempted by \texttt{irq\_M}, the value of \texttt{y}
remains \texttt{1}.
Similarly, the assertion in \texttt{irq\_M} holds because, even if it
is preempted by \texttt{irq\_H}, the store \texttt{x=1}
post-dominates the store \texttt{x=0}, meaning the value
of \texttt{x} remains \texttt{1} after \texttt{irq\_H} returns.

\begin{table*}[ht!]
\caption{Comparing \Name{} with Testing and Prior Verification Methods on the Programs in Fig.~\ref{fig:mot0} and Fig.~\ref{fig:mot1}.}
\label{tbl:difference}
\centering
\footnotesize
\scalebox{0.9}{
\begin{tabular}{|l||c|c||c|c|}\hline
 Property
             & ~~~~~Testing~\cite{Regehr05,Higashi10}~~~~~
             & Model Checking (bounded)~\cite{Kroening15,WuCMDW16} 
             & Abs.\ Int. for Threads~\cite{Mine14,KusanoW16}  
             & \Name{} for Interrupts (new)  \\
             \hline\hline

assertion in Fig.~\ref{fig:mot0}: \texttt{irq\_H}~~~~~ 
             &   violation   &   violation   &    warning         &  warning       \\\hline
assertion in Fig.~\ref{fig:mot0}: \texttt{irq\_L} 
             &   violation   &   violation   &    warning         &  warning       \\\hline
assertion in Fig.~\ref{fig:mot0}: \texttt{irq\_M} 
             &               &               &    (bogus) warning   &   \textbf{proof}        \\\hline

assertion in Fig.~\ref{fig:mot1}: \texttt{irq\_M} 
             &               &               &    (bogus) warning   &   \textbf{proof}        \\\hline
assertion in Fig.~\ref{fig:mot1}: \texttt{irq\_L} 
             &               &               &    (bogus) warning   &   \textbf{proof}        \\\hline
assertion in Fig.~\ref{fig:mot1}: \texttt{irq\_H} 
             &  violation    &   violation   &    warning         &  warning       \\\hline

\end{tabular}
}
\end{table*}

In contrast, the assertion in \texttt{irq\_H} may fail
if \texttt{irq\_H} preempts \texttt{irq\_M} right after the
conditional branch that sets \texttt{y} to \texttt{0}.  This
particular preemption is feasible because \texttt{irq\_H} has a higher
priority than \texttt{irq\_M}.

Therefore, our new method has to consider not only the different
levels of priority of all interrupts, but also the domination and
post-domination relations within each handler. It decides the
feasibility of \emph{store-to-load} data flows based on whether a load
has a dominated store, whether a store has a post-dominated store, and
whether a load-store pair is allowed by the priority levels of the
interrupts.  We present the details of this
\emph{feasibility-checking} algorithm in
Section~\ref{sec:contraint}.

To sum up, the main advantages of \Name{} over state-of-the-art
techniques are shown in Table~\ref{tbl:difference}.  Specifically,
testing and (bounded) model checking tools are good at detecting bugs
(e.g., assertion violations) but cannot prove the absence of bugs,
whereas thread-modular abstract interpretation tools are good at
obtaining proofs, but may report many false positives (i.e., bogus warnings).
In contrast, our new \emph{abstract interpretation} method is
significantly more accurate.  It can obtain more proofs than prior
techniques and, at the same time, can significantly reduce the number of bogus warnings.

\section{Preliminaries}
\label{sec:prelim}

In this section, we describe how interrupt-driven programs are
modeled in our framework by comparing their behavior to the
behavior of threads.  Then, we review the basics of prior abstract
interpretation techniques.

\subsection{Modeling of Interrupts}

We consider an interrupt-driven program as a finite set $T =\{
T_1,\ldots,T_n\}$ of sequential programs.  Each sequential program
$T_i$, where $1\leq i\leq n$, denotes an interrupt handler.  For ease
of presentation, we do not distinguish between the main program and
the interrupt handlers.  Globally, sequential programs in $T$ are
executed in a strictly interleaved fashion.  Each sequential program
may access its own local variables; in addition, it may access a set
of global variables, through which it communicates with the other
sequential programs in $T$.

The interleaving behavior of interrupts is a strict subset of the
interleaving behavior of threads (c.f.~\cite{Kroening15}).  This is
because concurrently running threads are allowed to freely preempt
each other's executions.  However, this is not the case for
interrupts.

\begin{figure}
\vspace{2ex}
\centering

\begin{minipage}{.8\linewidth}

\framebox[.35\linewidth]{
\begin{minipage}{.35\linewidth}

{\scriptsize

~~~~~\texttt{run0() \{}

\medskip

~~~~~~~~\texttt{stmt1;}

~~~~~~~~\texttt{stmt2;}

~~~~~\texttt{\}}
}

\end{minipage}
}
\framebox[.35\linewidth]{
\begin{minipage}{.35\linewidth}
{\scriptsize

~~~~~\texttt{run1() \{}

\medskip

~~~~~~~~\texttt{stmt3;}

~~~~~~~~\texttt{stmt4;}

~~~~~\texttt{\}}
}

\end{minipage}
}

\end{minipage}

\vspace{1ex}
\begin{minipage}{.85\linewidth}

{\scriptsize

\begin{itemize}

\item Possible traces for interrupts:
\begin{itemize}
\item \texttt{stmt1 $\rightarrow$ stmt2 $\rightarrow$ stmt3 $\rightarrow$ stmt4}
\item \texttt{stmt1 $\rightarrow$ stmt3 $\rightarrow$ stmt4 $\rightarrow$ stmt2}
\end{itemize}  

\item Possible traces for threads: 
\begin{itemize}
\item \texttt{stmt1 $\rightarrow$ stmt2 $\rightarrow$ stmt3 $\rightarrow$ stmt4}
\item \texttt{stmt1 $\rightarrow$ stmt3 $\rightarrow$ stmt4 $\rightarrow$ stmt2}
\item \texttt{stmt1 $\rightarrow$ stmt3 $\rightarrow$ stmt2 $\rightarrow$ stmt4}
\end{itemize}  

\end{itemize}

}
\end{minipage}


\vspace{1ex}
\caption{The interleavings (after \texttt{stmt1}) allowed by interrupts and threads.}
\label{fig:compBehavior}	
\end{figure}

Consider the example program in Fig.~\ref{fig:compBehavior}, which has
two functions named \texttt{run0} and \texttt{run1}.
If they were interrupts, where \texttt{run1} has a higher priority
level than \texttt{run0}, then after executing \texttt{stmt1}, there
can only be two possible traces.  The first one is for \texttt{run1}
to wait until \texttt{run0} ends, and the second one is
for \texttt{run1} to preempt \texttt{run0}.
If they were threads, however, there can be three possible traces
after executing \texttt{stmt1}.  In addition to the traces allowed by
interrupts, we can also execute \texttt{stmt3} in \texttt{run1}, then
execute \texttt{stmt2} in \texttt{run0}, and finally
execute \texttt{stmt4} in \texttt{run1}.  
The third trace is infeasible for interrupts because the
high-priority \texttt{run1} cannot be preempted by \texttt{stmt2} of
the low-priority \texttt{run0}.

Since the interleaving behavior of interrupts is a strict subset of
the interleaving behavior of threads, it is always safe to apply a
sound static verification procedure designed for threads to interrupts.  If the
verifier can prove the absence of bugs by treating interrupts as
threads, then the proof is guaranteed to be valid for interrupts.  The
result of this discussion can be summarized as follows:

\vspace{1ex}
\begin{theorem}
Since the interleaving behavior of interrupts is a subset of the interleaving
behavior of threads, proofs obtained by any sound abstract interpretation over
threads remain valid for interrupts.
\end{theorem}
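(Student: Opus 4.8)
The plan is to treat this as a \emph{soundness-transfer} argument: a sound thread-level abstract interpretation already over-approximates every thread behavior, and since interrupt behaviors are contained in thread behaviors, the same over-approximation is automatically sound for interrupts. The proof decomposes into establishing a trace-inclusion fact and then chaining set inclusions through the concretization function of the abstract domain.

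First I would fix the semantic domain. Let $\mathcal{B}_{\mathrm{int}}$ and $\mathcal{B}_{\mathrm{thr}}$ denote the sets of reachable concrete states (equivalently, the sets of feasible execution traces) of the same program $T = \{T_1,\ldots,T_n\}$ under the interrupt scheduler and under the free-interleaving thread scheduler, respectively. Next I would establish the inclusion $\mathcal{B}_{\mathrm{int}} \subseteq \mathcal{B}_{\mathrm{thr}}$. The thread transition relation permits any enabled routine to take the next step, whereas the interrupt transition relation is exactly this relation restricted by the rule that a running routine may be preempted only by a strictly higher-priority one. Since the interrupt rules are a special case of the thread rules, a straightforward induction on the length of an execution shows that every interrupt trace is also a thread trace; this is the formal counterpart of the example in Fig.~\ref{fig:compBehavior}.

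With the inclusion in hand, the remainder follows from the definition of soundness. By hypothesis the thread-level analysis is sound, so it computes an abstract element $a$ whose concretization over-approximates the concrete semantics, i.e.\ $\mathcal{B}_{\mathrm{thr}} \subseteq \gamma(a)$. If this analysis \emph{proves} an assertion $P$, that means $\gamma(a) \subseteq P$. Chaining the inclusions gives $\mathcal{B}_{\mathrm{int}} \subseteq \mathcal{B}_{\mathrm{thr}} \subseteq \gamma(a) \subseteq P$, so $P$ holds on every interrupt-reachable state, and the proof transfers to interrupts.

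The main obstacle is the trace-inclusion step, not the soundness chaining. Making it airtight requires spelling out both scheduling semantics precisely and, in particular, handling \emph{nested} preemption, so that an arbitrarily deep stack of interrupt activations is still producible under the more permissive thread semantics. Once the two operational semantics are pinned down, however, the inclusion is a direct structural induction, and the concluding argument is merely monotonicity of $\subseteq$ together with the definition of what it means for an abstract interpretation to be sound.
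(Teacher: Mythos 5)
Your proposal is correct and takes essentially the same route as the paper, which states the theorem as a direct summary of the preceding discussion: interrupt interleavings form a (strict) subset of thread interleavings (illustrated by Fig.~\ref{fig:compBehavior}), so any sound over-approximation of the thread behaviors also over-approximates the interrupt behaviors, and the paper gives no further formal proof. Your only addition is to make explicit the soundness chaining $\mathcal{B}_{\mathrm{int}} \subseteq \mathcal{B}_{\mathrm{thr}} \subseteq \gamma(a) \subseteq P$ and to flag the trace-inclusion induction (including nested activations) that the paper leaves informal.
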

\vspace{1ex}

However, the reverse is not true:  a bug reported by the
verifier that treats interrupts as threads may not be a real bug since the
erroneous interleaving may be infeasible.  In practice, there are
also tricky corner cases during the interaction of interrupts, such as
nested invocations of handlers, which call for a more
accurate modeling framework for interrupts.

Interrupts may also be invoked in a nested fashion as shown
by \emph{case3} in Fig.~\ref{fig:intSeq}, which complicates the static
analysis.  Here, we say interrupts are nested when one's handler
function is invoked before another's handler function returns, and
the third handler function is invoked before the second handler
function returns.  Such nested invocations are possible, for example, if
the corresponding interrupts have different priority levels, where the
inner most interrupt has the highest priority level.  This 
behavior is different from thread interleaving; with numerous corner
cases, it requires the development of dedicated modeling and
analysis techniques.

\subsection{Abstract Interpretation for Threads}
\label{sec:thread-mod}

Existing methods for modular abstract interpretation are designed
almost exclusively for multithreaded
programs~\cite{Mine11,Mine12,Mine14,KusanoW16}.  Typically, the
analyzer works on each thread in isolation, without creating a
monolithic verification model as in the non-modular 
techniques~\cite{FarzanK12,WuCMDW16}, to avoid the
up-front complexity blowup.  
At a high level, the analyzer iterates through threads in two steps: (1)
analyzing each thread in isolation, and (2) propagating results from
the shared-memory writes of one thread to the corresponding reads of
other threads.

Let the entire program $P$ be a finite set of threads, where each thread
$T$ is represented by a control-flow graph $\langle N, n_0, \delta \rangle$ with a
set of nodes $N$, an entry node $n_0$, and the transition relation
$\delta$.  Each pair $(n,n') \in \delta$ means control may flow from
$n$ to $n'$.
Each node $n$ is associated with an abstract memory-state
over-approximating the possible concrete states at $n$.
We assume the abstract domain (e.g., intervals) is defined as a
lattice with appropriate top ($\top$) and bottom ($\bot$) elements, a
partial-order relation ($\sqsubseteq$), and widening/narrowing
operators to ensure that the analysis eventually terminates~\cite{Cousot77}. 
We also define an interference $I$ that maps a variable $v$ to the
values stored into $v$ by some thread $T$.

Algorithm~\ref{alg:analyze-thread} shows how a thread-local analyzer works
on $T$ assuming some interferences $I$ provided by the environment
(e.g., writes in other threads). It treats $T$ as a sequential program.
Let $S(n)$ be the abstract memory-state at node $n$, $n_0$ be the
entry node of $T$, and $W$ be the set of nodes in $T$ left to be
processed.
The procedure keeps removing node $n$ from the work-list $W$ and
processing it until $W$ is empty (i.e., a fixed point is reached).

If node $n$ corresponds to a shared-memory read of variable $v$, then the transfer function $\tfunc$ (Line~7) assumes that
$n$ can read either the local value (from $S$) or the value written by
another thread (the interference $I(v)$).
The transfer function $\tfunc$ of an instruction $n$ takes some memory-state as input and returns a new memory-state as output; the new memory-state is the result of executing the instruction in the given memory-state. 
Otherwise, if $n$ is a local read, in which case the transfer function $\tfunc$
uses the local memory-state (Line~9) as in the abstract interpretation of any
sequential program.
The analysis result (denoted $S$) is an over-approximation of the memory
states within $T$ assuming interferences $I$.

\begin{algorithm}[t!]
  \caption{Local analysis of $T$ with prior interferences $I$.}
  \label{alg:analyze-thread}
{\footnotesize
\begin{algorithmic}[1]
  \Function{AnalyzeLocal}{$T = \langle N, n_0, \delta \rangle, I$}
    \State $S \gets \varnothing$ \Comment{Map from nodes to states}
    \State $W \gets \{n_0\}$ \Comment{Set of nodes to process}
    \While{$\exists n \in W$}
      \State $W \gets W \setminus \{n\}$
      \If {$n$ is a shared-memory read of variable $v$}
        \State $s \gets \tfunc(n, S(n) \sqcup I(v))$
      \Else
        \State $s \gets \tfunc(n, S(n))$
      \EndIf
      \ForAll{$\langle n, n' \rangle \in \delta$ such that $s \not\sqsubseteq
          S(n')$}
        \State $S(n') \gets S(n') \sqcup s$
        \State $W \gets W \cup \{n'\}$
      \EndFor

    \EndWhile
    \State \Return $S$
  \EndFunction
\end{algorithmic}
}
\end{algorithm}

The procedure that analyzes the entire program is shown in
Algorithm~\ref{alg:analyze-prog}.
It first analyzes each thread, computes the interferences, and then analyzes each
thread again in the presence of these interferences.
The iterative process continues until a fixed point on the memory-states of all threads is reached.
Initially, $S$ maps each node in the program to an empty memory-state
$\bot$.  $S'$ contains the analysis results after one iteration
of the fixed-point computation.
The function \textsc{Interf} returns the interferences of thread $T$,
i.e., a map from some variable $v$ to all the (abstract) values stored into $v$
by $T$.
Each thread $T$ is analyzed in isolation by the loop at Lines 4--9.
Here, we use $\uplus$ to denote  the join ($\sqcup$) of all  memory-states on the matching nodes.

\begin{algorithm}[t!]
  \caption{Analysis of the entire program, i.e., a set of $T$'s.}
  \label{alg:analyze-prog}
{\footnotesize
\begin{algorithmic}[1]
  \Function{AnalyzeProg}{$P$}
    \State $S \gets \text{map all nodes to $\bot$}$
    \State $S' \gets S$
    \Repeat
      \State $S = S'$
      \ForAll{$T \in P$}
        \State $I \gets \biguplus \Call{Interf}{T', S}$ for each $T' \in P, T' \neq T$
        \State $S' \gets S' \uplus \Call{AnalyzeLocal}{T, I}$
      \EndFor
    \Until{$S' = S$}
  \EndFunction
  \Function{Interf}{$T = \langle N, n_0, \delta \rangle$, S}
    \State $I \gets \varnothing$
    \ForAll{$n \in N$}
      \If {$n$ is a shared memory write to variable $v$}
        \State $I(v) \gets I(v) \sqcup \tfunc(n, \mathit{S}(n))$
      \EndIf
    \EndFor
    \State \Return $I$
  \EndFunction
\end{algorithmic}
}
\end{algorithm}

This \emph{thread-modular} abstract interpretation framework, while more efficient
than monolithic verification, is potentially less accurate.
For example, a load $l$ may see \emph{any} value written into the
shared memory by a store $s$ even if there does not exist a path in
the program where $l$ observes $s$.
This is why, as shown in Table~\ref{tbl:difference}, techniques
such as ~\cite{Mine14,KusanoW16} cannot obtain proofs for the 
programs in Fig.~\ref{fig:mot0} and Fig.~\ref{fig:mot1}.
In the context of interrupt handlers with priorities, it means that
even infeasible store-to-load flows due to priorities may 
be included in the  analysis, thus causing
false alarms.  

In the remainder of this paper, we show how to introduce priorities
into the propagation of data flows between interrupts during the
analysis, thereby increasing the accuracy while retaining its 
efficiency.

\section{Feasibility of Data Flows between Interrupts}
\label{sec:contraint}

In this section, we present our method for precisely modeling the
priority-based interleaving semantics of interrupts, and deciding the
feasibility of \emph{store-to-load} data flows between interrupts.
If, for example, a certain \emph{store-to-load} data flow is indeed
not feasible, it will not be propagated across interrupts in
Algorithm~\ref{alg:analyze-prog}.

More formally, given a set of \emph{store-to-load} pairs, we want to
compute a new \textrel{MustNotReadFrom} relation, such that
$\textrel{MustNotReadFrom}(l,s)$, for any load $l$ and store $s$,
means if we respect all the other existing \emph{store-to-load} pairs, then it
would be infeasible for $l$ to get the value written by $s$.

We have developed a Datalog-based declarative program analysis procedure for
computing \textrel{MustNotReadFrom}.  Toward this end, we first
generate a set of \emph{Datalog facts} from the program and the
given \emph{store-to-load} pairs.  Then, we generate a set
of \emph{Datalog rules}, which infer the new \textrel{MustNotReadFrom}
relation from the Datalog facts.  Finally, we feed the facts together
with the rules to an off-the-shelf Datalog engine, which computes 
the \textrel{MustNotReadFrom} relation.  In our implementation, we used the $\mu$-Z Datalog engine~\cite{Hoder11}
to solve the Datalog constraints.

\subsection{Inference Rules}

Before presenting the rules, we define some  relations:
\begin{compactitem}
\item $\textrel{Dom}(a,b)$: statement $a$ dominates $b$ in the CFG of an interrupt handler function. 
\item $\textrel{PostDom}(a,b)$: statement $a$ post-dominates $b$ in the CFG of an interrupt handler function. 
\item $\textrel{Pri}(s,p)$: statement $s$ has the priority level $p$.
\item $\textrel{Load}(l, v)$:  $l$ is a load of global variable $v$.
\item $\textrel{Store}(s, v)$:  $s$ is a store to global variable $v$.
\end{compactitem}
Dominance and post-dominance are efficiently
computable~\cite{Ferrante87} within each interrupt handler (not across
interrupt handlers).
Priority information for each interrupt handler, and thus all its statements,
may be obtained directly from the program. 
Similarly, \textrel{Load} and \textrel{Store} relations may be directly obtained from the
program.

Next, we present the rules for inferring three new relations: \textrel{NoPreempt}, \textrel{CoveredLoad} and \textrel{InterceptedStore}.

\vspace{1ex}
\paragraph{\textrel{NoPreempt}}

The relation means $s_1$ cannot preempt $s_2$, where $s_1$ and
$s_2$ are instructions in separate interrupt handlers.
From the interleaving semantics of interrupts, we know a handler may
only be preempted by another handler with a higher priority.
Thus,

\begin{equation*}
{\footnotesize
  \textrel{NoPreempt}(s_1, s_2)
    \leftarrow \textrel{Pri}(s_1, p_1)
               \land \textrel{Pri}(s_2, p_2)
               \land (p_2 \geq p_1)
}
\end{equation*}

\noindent
Here, $\textrel{Pri}(s_1,p_1)$ means $s_1$ belongs to a handler with
priority $p_1$, and $\textrel{Pri}(s_2,p_2)$ means $s_2$ belongs to a
handler with priority $p_2$. If $p_1$ is not higher
than $p_2$, then $s_1$ cannot preempt $s_2$.

\vspace{1ex}
\paragraph{\textrel{CoveredLoad}}

The relation means a load $l$ of a variable $v$ is covered by
a store $s$ to $v$ inside the same interrupt handler; this is the case when $s$
occurs before $l$ along all program paths.  This is captured
by the \emph{dominance} relation in the corresponding control flow graph:

\begin{equation*}
{\footnotesize
  \textrel{CoveredLoad}(l) \leftarrow
    \textrel{Load}(l, v)
    \land \textrel{Store}(s, v)
    \land \textrel{Dom}(s, l)
}
\end{equation*}

\vspace{1ex}
\paragraph{\textrel{InterceptedStore}}

The relation is similar to \textrel{CoveredLoad}.  We say a store
$s_1$ is intercepted by another store $s_2$ if $s_2$ occurs after $s_1$ along all program paths in the same handler.
Intuitively, the value written by $s_1$ is always overwritten by $s_2$ before the
handler terminates.  Formally,

\begin{equation*}
{\footnotesize%
\begin{aligned}%
  \textrel{InterceptedStore}(s_1) \leftarrow
    \textrel{Store}(s_1, v) 
    \land \textrel{Store}(s_2, v) \\
    \phantom{e} \land \textrel{PostDom}(s_2, s_1)
\end{aligned}%
}%
\end{equation*}

Finally, the \textrel{MustNotReadFrom} relation is deduced using all
aforementioned relations including \textrel{NoPreempt},  \textrel{CoveredLoad}
and \textrel{InterceptedStore}.  It indicates that, under the current
situation (defined by the set of existing store-to-load data flows), a
load $l$ cannot read from a store $s$ in any feasible interleaving.
There are several cases:

First, we say a load $l$ covered by a store in a handler $I$ cannot 
read from a store $s$ intercepted by another store in a handler $I'$,
 because $l$ cannot read from $s$ using any preemption or by running $I$ and $I'$ sequentially. 

\begin{equation*}
{\footnotesize%
\begin{aligned}%
  \textrel{MustNotReadFrom}(l, s) \leftarrow
    \textrel{CoveredLoad}(l)
    \land \textrel{Load}(l,v) \\
    \phantom{e} \land \textrel{Store}(s,v)
    \land \textrel{InterceptedStore}(s)
\end{aligned}%
}%
\end{equation*}

Second, we say a load $l$ covered by a store $s$ in a handler $I$
cannot read from any store $s'$ that cannot preempt $I$, because
the value of $s'$ will always be overwritten by $s$. 
That is, since $s'$ cannot preempt $I$, it cannot execute in between $s$ and $l$. 

\begin{equation*}
{\footnotesize%
\begin{aligned}%
  \textrel{MustNotReadFrom}(l, s) \leftarrow
    \textrel{CoveredLoad}(l)
    \land \textrel{Load}(l,v) \\
    \phantom{e} \land \textrel{Store}(s,v)
    \land \textrel{NoPreempt}(s, l)
\end{aligned}%
}%
\end{equation*}

Third, we say that, if a store $s$ is intercepted in a handler $I$,
then a load $l$ of the same variable that cannot preempt $s$, cannot 
read from the value stored by $s$. This is because the store
 intercepting $s$ will always overwrite the value. 

\begin{equation*}
{\footnotesize%
\begin{aligned}%
  \textrel{MustNotReadFrom}(l, s) \leftarrow
    \textrel{InterceptedStore}(s)
    \land \textrel{Store}(s,v) \\
    \phantom{e} \land \textrel{Load}(l,v)
    \land \textrel{NoPreempt}(l, s)
\end{aligned}%
}%
\end{equation*}

\subsection{The Running Examples}

To help understand how \textrel{MustNotReadFrom} is deduced from the
 Datalog rules and facts, we provide a few examples.
For ease of comprehension, we show in Table~\ref{tbl:rules} how \textrel{MustNotReadFrom} may be 
deduced from \textrel{InterceptedStore}, \textrel{CoveredLoad}
and \textrel{NoPreempt}.
Since all stores are either in or outside \textrel{InterceptedStore},
and all loads are either in or outside \textrel{CoveredLoad}, our
rules capture the \textrel{MustNotReadFrom} relation between all
stores and loads.

\begin{table}[ht]
\caption{MustNotReadFrom rules based on InterceptedStore, CoveredLoad and Priority}
\label{tbl:rules}
\centering
\resizebox{\linewidth}{!}{
\begin{tabular}{l|l|l}
\toprule
& \textrel{InterceptedStore}($s$) & \textrel{Not InterceptedStore}($s$)  \\
\midrule

\textrel{CoveredLoad}($l$)& 
\textrel{MustNotReadFrom}($l$, $s$) &
1) Not \textrel{NoPreempt}($s$, $l$) $\rightarrow $ \\
&& Possibly \textrel{ReadFrom}($l$, $s$)   \\
&& 2) \textrel{NoPreempt}($s$, $l$) $\rightarrow $ \\
&& \textrel{MustNotReadFrom}($l$, $s$)
\\
\midrule
\textrel{Not CoveredLoad}(l)&
1) \textrel{NoPreempt}($l$, $s$) $\rightarrow $ &  
Possibly \textrel{ReadFrom}($l$, $s$) \\
& \textrel{MustNotReadFrom}($l$, $s$) &  \\
& 2) Not \textrel{NoPreempt}($l$, $s$) $\rightarrow $ & \\
&  Possibly \textrel{ReadFrom}($l$, $s$) &\\

\bottomrule
\end{tabular}
}
\end{table}

Specifically, if a store $s$ is \textrel{InterceptedStore} and a load
$l$ is \textrel{CoveredLoad}, there is no way for the load to read
from the store (Row 2 and Column 2).

If a load $l$ is not \textrel{CoveredLoad} and a store $s$ is
not \textrel{InterceptedStore}, the load may read from the store by
running sequentially or via preemption (Row 3 and Column 3).

If a load $l$ is \textrel{CoveredLoad}, a store $s$ is
not \textrel{InterceptedStore} and the handler of the store can
preempt the handler of the load, the load may read from the store through
preemption (the first case at Row 2 and Column 3).  However, if the
handler of the store cannot preempt the handler of the load, it is
impossible for the load to read from the store; in this case, the load always reads
from the store in the same interrupt handler (the second case at Row 2 and
Column 3).

Lastly, if a load $l$ is not \textrel{CoveredLoad}, a store $s$
is \textrel{InterceptedStore}, and the handler of the load cannot
preempt the handler of the store, the load cannot read from the store
since the value of the store is always overwritten by another store in
the same handler (the first case at Row 3 and Column 2).  However, if the
handler of the load can preempt the handler of the store, then the
load can read from the store through preemption in between two stores (the second
case at Row 3 and Column 2).

\begin{figure}[ht]
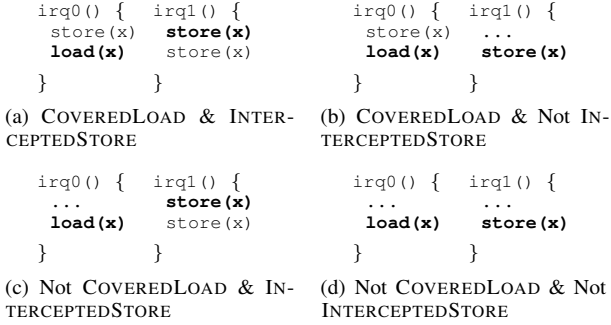

\centering
\subfigure[\textrel{CoveredLoad} \& \textrel{InterceptedStore}]{
\centering
\begin{minipage}{.4\linewidth}
\centering
\begin{minipage}{.4\linewidth}
{\scriptsize

\texttt{irq0() \{}

~~\texttt{store(x)}

~~\texttt{\textbf{load(x)}}

\}
\vspace{2ex}
}
\end{minipage}
\centering
\begin{minipage}{.4\linewidth}
{\scriptsize

\texttt{irq1() \{}

~~\texttt{\textbf{store(x)}}

~~\texttt{store(x)}

\}
\vspace{2ex}
}
\end{minipage}
\end{minipage}
\label{fig:case1}
}
\hspace{1ex}
\subfigure[\textrel{CoveredLoad} \& Not \textrel{InterceptedStore}]{
\centering
\begin{minipage}{.4\linewidth}
\centering
\begin{minipage}{.4\linewidth}
{\scriptsize

\texttt{irq0() \{}

~~\texttt{store(x)}

~~\texttt{\textbf{load(x)}}

\}
\vspace{2ex}
}
\end{minipage}
\centering
\begin{minipage}{.4\linewidth}
{\scriptsize

\texttt{irq1() \{}

~~\texttt{...}

~~\texttt{\textbf{store(x)}}

\}
\vspace{2ex}
}
\end{minipage}
\end{minipage}
\label{fig:case2}
}
\vspace{2ex}
\subfigure[Not \textrel{CoveredLoad} \& \textrel{InterceptedStore}]{
\centering
\begin{minipage}{.4\linewidth}
\centering
\begin{minipage}{.4\linewidth}
{\scriptsize

\texttt{irq0() \{}

~~\texttt{...}

~~\texttt{\textbf{load(x)}}

\}
\vspace{2ex}
}
\end{minipage}
\centering
\begin{minipage}{.4\linewidth}
{\scriptsize

\texttt{irq1() \{}

~~\texttt{\textbf{store(x)}}

~~\texttt{store(x)}

\}
\vspace{2ex}
}
\end{minipage}
\end{minipage}
\label{fig:case3}
}
\hspace{1ex}
\subfigure[Not \textrel{CoveredLoad} \& Not \textrel{InterceptedStore}]{
\centering
\begin{minipage}{.4\linewidth}
\centering
\begin{minipage}{.4\linewidth}
{\scriptsize

\texttt{irq0() \{}

~~\texttt{...}

~~\texttt{\textbf{load(x)}}

\}
\vspace{2ex}
}
\end{minipage}
\centering
\begin{minipage}{.4\linewidth}
{\scriptsize

\texttt{irq1() \{}

~~\texttt{...}

~~\texttt{\textbf{store(x)}}

\}
\vspace{2ex}
}
\end{minipage}
\end{minipage}
\label{fig:case4}
}
\caption{Examples for each case in Table~\ref{tbl:rules}.}
\label{fig:caseExamples}
\end{figure}

Fig.~\ref{fig:caseExamples} shows concrete examples of the four cases presented  in 
Table~\ref{tbl:rules}, where the figures correspond to the cases. 

Fig.~\ref{fig:case1} represents the case at Row 2 and Column 2 in Table~\ref{tbl:rules} and
Fig.~\ref{fig:case2} represents the case at Row 2 and Column 3.  In both programs, only the interference between
bold-style statements are considered.

Fig.~\ref{fig:case1} shows an interference
between \textrel{CoveredLoad} and \textrel{InterceptedStore}.  Since
the load in \texttt{irq0} is always overwritten by another store in it and a
value of the first store in \texttt{irq1} is always updated by a store
in it, the load cannot read a value from the store by preemption
or running sequentially.

Fig.~\ref{fig:case2} shows an interference
between \textrel{CoveredLoad} and not \textrel{InterceptedStore}.  In
this case, if \texttt{irq1} can preempt \texttt{irq0}, then the store
from \texttt{irq1} can occur between the store and the load
in \texttt{irq0}.  Otherwise, the load from \texttt{irq0} cannot read
a value from the store from \texttt{irq1}.

Fig.~\ref{fig:case3} shows an interference between
not \textrel{CoveredLoad} and \textrel{InterceptedStore}.  Similarly,
if \texttt{irq0} can preempt \texttt{irq1}, the load
from \texttt{irq0} can occur between the two stores from \texttt{irq1}.
Thus, it is possible for the load to read a value from the first store
in \texttt{irq1}.  Otherwise, the load cannot read a value from the
store by preemption of \texttt{irq1} or running sequentially.

Fig.~\ref{fig:case4} shows an interference between
not \textrel{CoveredLoad} and not \textrel{InterceptedStore}.  Here,
the load in \texttt{irq0} can read a value from the store
in \texttt{irq1} by running sequentially or through preemption.  Therefore,
it is possible for the load to read a value from the store 
 as described at Row 3 and Column 3 in Table~\ref{tbl:rules}. 

To sum up, we can use the three inference rules to determine
  infeasible store-to-load pairs for all these cases.

\subsection{Soundness of the Analysis}

By soundness, we mean the \textrel{MustNotReadFrom} relation deduced
from the Datalog facts and rules is an \emph{under-approximation}.
That is, any pair $(l,s)$ of load and store in this relation is
guaranteed to be infeasible.  However, we do not attempt to identify
\emph{all} the infeasible pairs because the goal here is to \emph{quickly} identify 
\emph{some} infeasible pairs and skip them during the more expensive
abstract interpretation computation.

\vspace{1ex}
\begin{theorem}
Whenever $\textrel{MustNotReadFrom}(l,s)$ holds, the load $l$ cannot
  read from the store $s$ on any concrete execution of the 
  program.  \label{thrm:mnrf}
\end{theorem}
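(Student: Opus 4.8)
The plan is to prove soundness by case analysis on the single inference rule that could have derived $\textrel{MustNotReadFrom}(l,s)$, since the three displayed rules are the only ones whose head is \textrel{MustNotReadFrom}; hence the case split is exhaustive. In each case I would assume, toward a contradiction, that some concrete execution has $l$ reading from $s$, and show this contradicts either the priority-based preemption semantics or the dominance/post-dominance facts that licensed the rule. Throughout I use the read-from characterization implied by the sequentially consistent memory model of Section~\ref{sec:prelim}: ``$l$ reads from $s$'' means $s$ writes the shared variable $v$ at some global step $t_s$, the load $l$ reads $v$ at a later step $t_l$, and no store to $v$ occurs strictly between $t_s$ and $t_l$. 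Recall also that these are store-to-load pairs between distinct handlers, with $l$ in handler $I$ (priority $p_I$) and $s$ in handler $I' \neq I$ (priority $p_{I'}$).

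Before the case analysis I would establish three auxiliary facts. First, a \emph{stack lemma} formalizing the interrupt semantics: at every point of a concrete execution the suspended handlers form a stack whose priorities strictly increase from bottom to top, so a handler $A$ can execute while a handler $B$ is unfinished on the stack only if $p_A > p_B$. Consequently, if $\textrel{NoPreempt}(A,B)$ holds (i.e.\ $p_B \ge p_A$), then $A$ never runs while $B$ is on the stack, so $A$'s body executes entirely before $B$ is invoked or entirely after $B$ returns. Second, a \emph{dominance} fact: if $\textrel{Dom}(s_0,l)$, then in any invocation of $I$ reaching $l$, the store $s_0$ to $v$ executes before $l$ in $I$'s local order. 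Third, a \emph{post-dominance} fact: if $\textrel{PostDom}(s_2,s)$, then in any invocation of $I'$ in which $s$ executes, the store $s_2$ to $v$ also executes, after $s$ and before $I'$ returns; crucially this survives nested preemptions of $I'$, since $s_2$ lies on every CFG path from $s$ to the exit of $I'$.

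With these in hand the three cases are short. For the rule using $\textrel{CoveredLoad}(l) \land \textrel{InterceptedStore}(s)$, fact (ii) gives a dominating store $s_0$ to $v$ in $I$ executing before $l$; if $s_0$ occurs after $t_s$ it is itself a store to $v$ in $(t_s,t_l)$, contradicting the read-from condition, and if $s_0$ occurs before $t_s$ then control left $I$ after $s_0$ and returned at $l$, forcing the invocation of $I'$ containing $s$ to return before $l$, whence by (iii) the intercepting store $s_2$ executed in $(t_s,t_l)$ -- again a contradiction. For the rule using $\textrel{CoveredLoad}(l) \land \textrel{NoPreempt}(s,l)$ the same split on $s_0$ applies, except the ``$s_0$ before $t_s$'' branch now requires $I'$ to run while $I$ is suspended between $s_0$ and $l$, which the stack lemma forbids because $\textrel{NoPreempt}(s,l)$ gives $p_I \ge p_{I'}$. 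For the rule using $\textrel{InterceptedStore}(s) \land \textrel{NoPreempt}(l,s)$, the stack lemma with $p_{I'} \ge p_I$ shows $I$ never runs while $I'$ is on the stack, so $l$ executes only after $I'$ has returned; but then by (iii) the intercepting store $s_2$ already executed after $s$ and before that return, placing a store to $v$ in $(t_s,t_l)$ and contradicting the read-from condition.

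The main obstacle I anticipate is fact (i), the stack lemma: turning the informal description ``higher priority preempts lower, equal priority cannot preempt, nested invocations allowed'' into a precise run-time invariant and proving, by induction on execution steps, that a handler of priority $\le p_{I}$ can never execute while a handler of priority $p_I$ sits unfinished on the stack -- including the transitive case where $I'$ reaches $I$'s suspended frame only through a chain of intermediate higher-priority handlers. Once this invariant is established, the post-dominance argument under nested preemption (fact (iii)) and the three case analyses above follow routinely.
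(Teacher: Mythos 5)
Your proposal is correct and takes essentially the same approach the paper intends: the paper's own ``proof'' of Theorem~\ref{thrm:mnrf} merely asserts that rule-wise correctness composes and that a rigorous proof-by-contradiction per rule is straightforward and omitted, and your rule-by-rule contradiction argument is exactly that omitted proof. If anything, you supply the rigor the paper skips --- the stack invariant (strictly increasing priorities up the stack of suspended handlers, covering the transitive nested case), the post-dominance fact surviving nested preemption, and the explicit restriction to inter-handler $(l,s)$ pairs, which is indeed needed since the first rule would be unsound intra-handler.
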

\vspace{1ex}

The soundness of our analysis as stated above can be established in two steps.  First, assume that 
each individual rule is correct, the composition is also correct.  Second, while presenting these rules, we have
sketched the intuition behind the correctness of each rule.
A more rigorous proof can be formulated via proof-by-contradiction in
a straightforward fashion, which we omit for brevity.

\section{The Overall Analysis Procedure}
\label{sec:prior-analysis}

We now explain how to integrate the feasibility checking technique
into the overall procedure for iterative analysis, which leverages
the \textrel{MustNotReadFrom} relation to improve performance.
Specifically, when analyzing each interrupt handler $T$, we filter out
any interfering stores from other interrupt handlers that are deemed
infeasible, thereby preventing their visibility to $T$.
This can be implemented in Algorithm~\ref{alg:analyze-prog} by
modifying the function \proc{Interf}, as well as the function
\proc{AnalyzeLocal} defined in Algorithm~\ref{alg:analyze-thread}.

Our modifications to \textproc{Interf} are shown in
Algorithm~\ref{alg:interfs-mod}.  That is, when computing the
interferences of $T$, we choose to create a set of store--state pairs,
instead of eagerly joining all these states.
By delaying the join of these states, we obtain the opportunity to
filter out the infeasible store-to-load pair individually. 
For this reason, we overload the definition of $\uplus$ to be
the join ($\sqcup$) of sets on matching variables.

\begin{algorithm}[t!]
  \caption{Analysis of the entire program (cf.\
  Alg.~\ref{alg:analyze-prog}).}
  \label{alg:interfs-mod}
{\footnotesize
\begin{algorithmic}[1]
\makeatletter
\setcounter{ALG@line}{9}
\makeatother
  \Function{Interf}{$T = \langle N, n_0, \delta \rangle$, $T' = \langle N', n'_0, \delta' \rangle$,  S}
    \State $I \gets \varnothing$
    \ForAll{$n \in N$}
      \If {$n$ is a shared memory write to variable $v$}
        \State $I(v) \gets I(v) \uplus \{(n, \tfunc(n, \mathit{S}(n))\}$
      \EndIf
    \EndFor
    \State \Return $I$
  \EndFunction
\end{algorithmic}
}
\end{algorithm}

Next, we modify the abstract interpretation procedure for a single
interrupt handler as shown in
Algorithm~\ref{alg:analyze-thread-mod}.  The process remains the same
as \proc{AnalyzeLocal} of Algorithm~\ref{alg:analyze-thread}
except that, when a load $l$ is encountered (Line 6), we join the
state from all interfering stores while removing any that must not
interfere with $l$, as determined by the \textrel{MustNotReadFrom}
relation (Line 7).

\begin{algorithm}[t!]
  \caption{Analysis of a single interrupt (cf.\
  Alg.~\ref{alg:analyze-thread}).}
  \label{alg:analyze-thread-mod}
{\footnotesize%
\begin{algorithmic}[1]
  \Function{AnalyzeLocal}{$T = \langle N, n_0, \delta \rangle, I$}
  \Statex \ldots
\makeatletter
\setcounter{ALG@line}{5}
\makeatother
      \If {$n$ is a shared-memory read of variable $v$}
        \State $i \gets \bigsqcup\{n \mid (\mathit{st},s) \in I(v) 
            \land \lnot\textrel{MustNotReadFrom}(l,\mathit{st})\}$
        \State $s \gets \tfunc(n, S(n) \sqcup i)$
      \Else
        \State $s \gets \tfunc(n, S(n))$
      \EndIf
      \Statex \ldots

  \EndFunction
\end{algorithmic}%
}
\end{algorithm}

The remainder of the modular analysis remains the same as in
Algorithm~\ref{alg:analyze-prog}.

For example, in Fig.~\ref{fig:case1}, existing thread-modular abstract
interpretation methods would consider the two stores from \texttt{irq1}
for the load of \texttt{x} in \texttt{irq0}.  In contrast, we use
Algorithm~\ref{alg:analyze-thread-mod} to remove the pairing of the load
in \texttt{irq0} and the first store in \texttt{irq1}, since the load
and the store satisfy the \textrel{MustNotReadFrom} relation.
Similarly, the pairing of the load of \texttt{x} in \texttt{irq0} and the
store of \texttt{x} in \texttt{irq1} is filtered out
when \texttt{irq1}'s priority is not higher than \texttt{irq0}'s
 priority as shown in Fig.~\ref{fig:case2}.

\ignore{

Finally, we state the correctness of our algorithm.
\begin{theorem}
  The priority-aware thread-modular analysis is sound.
\end{theorem}
\begin{proof}
  The process of removing infeasible store-to-load is a form of semantic
  reduction~\cite{Cousot79}  relative to the priority unaware thread-modular
  analysis~\cite{Mine11,Mine12,Mine14}. Specifically, we first divide the
  interferences into an explicit set of interfering stores and their memory
  state, and then we filter any infeasible store-to-load pairs (often referred
  as focusing~\cite{Sagiv99}).
  The analysis determining infeasibility is sound (\textrel{MustNotReadFrom},
  Theorem~\ref{thrm:mnrf}).
  Thus, only truly infeasible store-to-load flows will be removed. So the
  analysis is sound.
\end{proof}

}

Our method can handle programs with loops.  Fig.~\ref{fig:loopmot}
 shows an example, which has two interrupt handlers where 
 \texttt{irq1} has higher priority than \texttt{irq0}. 
Note that \texttt{irq0} loads \texttt{x} and stores the value into \texttt{b}.
Since \texttt{x} is initialized to \texttt{0}, the handler checks 
whether the value of \texttt{b} is \texttt{0}.  
\texttt{irq1} has a loop containing two stores of \texttt{x}. 
 First, it stores the value \texttt{1} and then the value \texttt{0}.  
 Using traditional thread-modular abstract interpretation, we would assume that 
 \texttt{x=1} and \texttt{x=0}  are all possible stores
 to the load of \texttt{x} in \texttt{irq0}.  This would lead to 
 a bogus violation of the assertion in \texttt{irq0}.

However, in our analysis, this bogus violation is avoided by
using the \emph{post-dominate} relation between statements.  Inside the
while-loop of \texttt{irq1}, \texttt{x=0}
post-dominates \texttt{x=1}, meaning that \texttt{x=0} always occurs
after \texttt{x=1}.  Therefore, using our Datalog inference rules presented in the previous
section, we conclude that the store \texttt{x=1} cannot reach the load
of \texttt{x} in \texttt{irq0}.  Thus, it is impossible for
the value \texttt{1} to be stored in \texttt{b} and then cause the assertion violation.

\begin{figure}[!]
\vspace{1ex}
\centering
\hspace{0\linewidth}
  \centering
\framebox[.9\linewidth]{
\begin{minipage}{.9\linewidth}
\centering
\begin{minipage}{.45\linewidth}
{\scriptsize
\texttt{irq0() \{}

\vspace{1ex}

~~~ \texttt{b = x;}

\vspace{1ex}

~~~ \texttt{assert(b == 0)}; 

\vspace{1ex}

\}
}
\end{minipage}
\begin{minipage}{.45\linewidth}
{\scriptsize
\texttt{irq1() \{}

~~~\texttt{while(...) \{}

~~~~~~\texttt{x = 1;}

~~~~~~\texttt{x = 0;}

~~\}

\}
}
\end{minipage}
\end{minipage}
}
\caption{A small example with a loop.}
\label{fig:loopmot}	
\end{figure}

\section{Experiments}
\label{sec:experiment}

We have implemented \Name{}, our new abstract interpretation
framework in a static verification tool for interrupt-driven C
programs.  It builds on a number of open-source tools including
Clang/LLVM~\cite{Adve03} for implementing the C front-end, Apron
library~\cite{Jeannet09} for implementing the abstract domains, and
$\mu Z$~\cite{Hoder11} for solving the Datalog constraints.  We
experimentally compared \Name{} with both iCBMC~\cite{Kroening15}, a
model checker for interrupt-driven programs and the state-of-the-art
thread-modular abstract interpretation method by
Min\'e~\cite{Mine14,Mine11}.  We conducted our experiments on a
computer with an Intel Core i5-3337U CPU, 8 GB of
RAM, and the Ubuntu 14.04 Linux operating system.

Our experiments were designed to answer the following research
questions:
\begin{itemize}
\item 
Can \Name{} prove more properties (e.g., assertions) than
state-of-the-art techniques such as iCBMC~\cite{Kroening15} and
Min\'e~\cite{Mine14,Mine11}?
\item 
Can \Name{} achieve the aforementioned higher accuracy while
maintaining a low computational overhead?
\item
Can \Name{} identify and prune away a large number of infeasible store-load pairs?
\end{itemize}
Toward this end, we evaluated \Name{} on 35 interrupt-driven C
programs, many of which are from real applications such as control
software, firmware, and device drivers.  These benchmark programs,
together with our software tool, have been made available
online~\cite{SaInterrupt-Bench}. The detailed description of each
benchmark group is shown in Table~\ref{tab:benchmark}. In total, there
are 22,541 lines of C code.

\begin{table}
\centering
\footnotesize
\addtolength{\tabcolsep}{-2pt}
\caption{\label{tab:benchmark} Benchmark programs used in our experimental evaluation.}
\scalebox{.9}{
\begin{tabular}{|l|p{.9\linewidth}|} \hline
{\bf Name } & {\bf Description} \\
\hline\hline
test        & Small programs created to conduct the sanity check of \Name{}'s handling of various interrupt semantics.\\\hline
logger      & Programs that model parts of the firmware of a temperature logging device from a major industrial enterprise. There are two major interrupt handlers: one for measurement and the other for communication.  \\\hline
blink       & Programs that control LED lights connected to the MSP430 hardware, to check the timer values and change LED blinking based on the timer values. \\\hline
brake       & Programs generated from the Matlab/Simulink model of a brake-by-wire system from Volvo Technology AB, consisting of a main interrupt handler and four other handlers for computing the braking torque based on the speed of each wheel. \\\hline
usbmouse    & USB mouse driver from the Linux kernel, consisting of the device open, probe, and disconnect tasks with interrupt handlers. \\\hline
usbkbd      & USB keyboard driver from the Linux kernel, consisting of the device open, probe, and disconnect tasks with interrupt handlers. \\\hline 
rgbled      & USB RGB LED driver from the Linux kernel.  We use initialization of \emph{led} and \emph{rgb} functions and the \emph{led} probe function, and check the consistency of the \emph{led} and \emph{rgb} device values using interrupts. \\\hline
rcmain      & Linux device driver for a remote controller core, including operations such as device register, free, check the device information, and update protocol values.  We check the consistency of the device information and protocol values using several interrupt handlers.\\\hline 
others      & Programs collected from Linux kernel drivers for supporting hardware such as ISA boards, TCO timer for i8xx chipsets, and watch dog. \\\hline

\end{tabular}
}
\end{table}

\subsection{Results}

Table~\ref{tbl:resultbox} shows the experimental results.  
Columns~1-4 show the name, the number of lines of code (LoC), the
number of interrupt handlers, and the number of assertions used
 for each benchmark program.
Columns 5-7 show the results of \texttt{iCBMC}~\cite{Kroening15}, including the number
of violations detected, the number of proofs obtained, and the total
execution time.
Columns 8-10 show the results of Min\'e's abstract interpretation method~\cite{Mine14,Mine11}.
Columns 11-13 show the results of \Name{}, our new abstract interpretation tool
for interrupts.

\begin{table*}[ht]
\caption{Results of comparing \Name{} with state-of-the-art techniques on 35 interrupt-driven programs.} 
\label{tbl:resultbox}
\centering
\scalebox{0.965}{
\begin{tabular}{lccccccccccccc}
\toprule
& & & & \multicolumn{3}{c}{iCBMC~\cite{Kroening15}}   
& \multicolumn{3}{c}{Min\'{e}~\cite{Mine14,Mine11}} 
& \multicolumn{3}{c}{intAbs (new)} \\
\cmidrule(r){5-7}
\cmidrule(r){8-10}
\cmidrule(r){11-13}

Name & LOC & Interrupts & Assertions & Violations  & Proofs & Time (s) & \textcolor{black}{Warnings} & Proofs & Time (s)
& \textcolor{black}{Warnings} & Proofs & Time (s) \\
\midrule
test1&46&2 & 2 &0&0&0.23 &1&1&0.18  &0&2&0.07 \\
test2&65&3 & 3 &1&0&0.55 &3&0&0.05  &1&2&0.06 \\
test3&86&4 & 4 &1&0&0.52 &4&0&0.06  &2&2&0.10 \\
test4&56&2 & 2 &1&0&0.52 &2&0&0.04  &1&1&0.05 \\
test5&54&2 & 2 &1&0&1.56 &2&0&0.04  &1&1&0.04 \\
\hline
logger1&161&2 & 1 &0&0&0.45 &1&0&0.22  &0&1&0.27 \\
logger2&183&3 & 3 &0&0&0.50 &2&1&0.29  &0&3&0.39 \\
logger3&195&4 & 4 &0&0&0.46 &1&3&0.31  &0&4&0.43 \\
\hline
blink1&164&3 & 3 &1&0&0.65 &3&0&0.12  &2&1&0.18 \\
blink2&174&4 & 3 &1&0&0.67 &3&0&0.16  &2&1&0.30 \\
blink3&194&5 & 4 &2&0&1.14 &4&0&0.25  &3&1&0.46 \\
\hline
brake1&819&2 & 5 &1&0&0.87 &3&2&0.66  &1&4&0.98 \\
brake2&818&3 & 4 &3&0&2.24 &4&0&1.67  &3&1&1.91 \\
brake3&833&4 & 5 &2&0&2.38 &5&0&2.58  &4&1&3.48 \\
\hline
usbmouse1&426&2 & 8 &2&0&0.79 &7&1&0.11  &2&6&0.13 \\
usbmouse2&442&4 & 16 &2&0&0.69 &16&0&0.31 &5&11&0.69 \\
usbmouse3&449&5 & 20 &11&0&4.00 &20&0&0.52  &11&9&1.28 \\
\hline
usbkbd1&504&2 & 8 &3&0&0.91 &8&0&0.23  &4&4&0.39 \\
usbkbd2&512&3 & 12 &2&0&1.20 &12&0&0.51  &4&8&1.09 \\
usbkbd3&531&5 & 20 &3&0&1.19 &20&0&1.86  &12&8&4.44 \\
\hline
rgbled1&656&2 & 10 &5&0&0.71 &10&0&0.41   &5&5&0.77 \\
rgbled2&679&3 & 15 &5&0&1.11 &15&0&0.99  &5&10&2.39  \\
rgbled3&701&4 & 20 &5&0&1.07 &20&0&2.18  &10&10&5.68 \\
\hline
rcmain1&2060&3 & 9  &0&0&5.36 &9&0&1.58  &0&9&1.80 \\
rcmain2&2088&5 & 15 &6&0&12.39 &15&0&6.93  &6&9&9.46 \\
rcmain3&2102&6 & 18 &9&0&3.95 &18&0&12.20  &9&9&16.35 \\
\hline
i2c\_pca\_isa\_1&321&4 & 6 &0&0&0.41 &6&0&0.14  &0&6&0.29 \\
i2c\_pca\_isa\_2&341&6 & 10 &8&0&2.24 &10&0&0.36  &8&2&1.05 \\
i2c\_pca\_isa\_3&363&8 & 14 &12&0&4.98 &14&0&0.85  &12&2&2.48 \\
i8xx\_tco\_1&757&3 & 2 &0&0&0.30 &2&0&0.28  &0&2&0.35 \\
i8xx\_tco\_2&949&4 & 2 &1&0&0.96 &2&0&0.43  &1&1&0.54 \\
i8xx\_tco\_3&944&6 & 3 &0&0&0.52 &3&0&0.81  &0&3&1.04 \\
wdt\_pci\_1&1239&4 & 2 &0&0& 0.41 &2&0&0.40  &0&2&0.61 \\
wdt\_pci\_2&1290&6 & 3 &0&0& 0.43 &3&0&0.78  &1&2&1.45 \\
wdt\_pci\_3&1339&8 & 4 &0&0& 0.39 &4&0&1.41  &3&1&3.21 \\

\midrule
\textbf{Total} & 22,541 & 136 & 262 & 88 & \textbf{0}  & 56.75 & (254)$^*$ & \textbf{8}  & 39.92 & (118) & \textbf{144} & 64.21 \\

\bottomrule

\end{tabular}
}
 \begin{tablenotes}
      \small
      \item $^*$ indicates the results contain bogus warnings, because the technique was designed for threads, not for interrupts. 
    \end{tablenotes}
\end{table*}

Since \texttt{iCBMC} conducts \emph{bounded} analysis, when it detects
a violation, it is guaranteed to be a real violation; however, when it
does not detect any violation, the property remains undetermined.
Furthermore, since \texttt{iCBMC} by default stops as soon as it
detects a violation, we evaluated it by repeatedly removing the violated
property from the program until it could no longer detect any new
violation.
Also note that since \texttt{iCBMC} requires the user to manually set
up the \emph{interrupt-enabled points} as described in \cite{Kroening15}, during the experiments, we
first ordered the interrupts by priority and then set 
interrupt-enabled points at the beginning of the next interrupt
handler.
For example, given three interrupts \texttt{irq\_L}, \texttt{irq\_M}
and \texttt{irq\_H}, we would set the enabled point of \texttt{irq\_L}
in a main function, the enabled point of \texttt{irq\_M} at the
beginning of \texttt{irq\_L}, and the enabled point of \texttt{irq\_H}
at the beginning of \texttt{irq\_M}.

Overall, \texttt{iCBMC} found 88 violations while obtaining 0 proofs.
\texttt{Min\'e}'s method, which was geared toward proving properties in threads, 
 obtained 8 proofs while reporting 254 warnings, many of which
turned out to be \emph{bogus} warnings.  
%
In contrast, our new method, \Name{}, obtained 144 proofs while
reporting 118 warnings.  This is significantly more accurate than the
prior techniques.  

In terms of the execution time, \Name{} took 64
seconds, which is slightly long than the 39 seconds taken
by \texttt{Min\'e}'s method and the 56 seconds taken by \texttt{iCBMC}.

\subsection{Infeasible Pairs}

Since \Name{} removes infeasible store-load pairs during the
 iterative
analysis of individual interrupt handlers, it tends to spend 
extra time checking the feasibility of these data flow pairs.
Nevertheless, this is the main source of accuracy improvement of \Name{}.
Thus, to understand the trade-off, we have investigated, for each
benchmark program, the total number of store-load pairs and the number
of infeasible store-load pairs identified by our technique.
Table~\ref{tbl:resultPair} summarizes the results, where Column~3
shows the total number of store-load pairs, Column 4 shows the number
of infeasible pairs, and Column 5 shows the percentage.

\begin{table}[ht]
\vspace{1ex}
\caption{Results of total and filtered store-load pairs using \texttt{intAbs}. }
\label{tbl:resultPair}
\centering
\scalebox{0.94}{
\begin{tabular}{lcccc}
\toprule

Name & LOC & \# of Pairs & \# of Filtered Pairs & Filtered Ratio \\
\midrule
test1  &46 &1  &1 & 100\% \\
test2  &65 &4  &2 &  50\% \\
test3  &86 &16 &8 &  50\% \\
test4  &56 &4  &3 &  75\% \\
test5  &54 &4  &3 &  75\% \\
\hline
logger1&161&18 &2 &  11\% \\
logger2&183&32 &6 &  18\% \\
logger3&195&34 &6 &  17\% \\
\hline
blink1&164 &19 &15&  78\% \\
blink2&174 &56 &32&  57\% \\
blink3&194 &120&63&  52\% \\
\hline
brake1&819 &34 &24&  70\% \\
brake2&818 &82 &58&  70\% \\
brake3&833 &164&128&  78\% \\
\hline
usbmouse1&426 &12  &8   &  66\% \\
usbmouse2&442 &168 &136 &  80\% \\
usbmouse3&449 &288 &208 &  72\% \\
\hline
usbkbd1&504   &40  &20  &  50\% \\
usbkbd2&512   &120 &80  &  66\% \\
usbkbd3&531   &400 &280 &  70\% \\
\hline
rgbled1&656   &76  &38  &  50\% \\
rgbled2&679   &228 &152 &  66\% \\
rgbled3&701   &456 &304 &  66\% \\
\hline
rcmain1&2060  &84  &84  & 100\% \\
rcmain2&2088  &560 &476 &  85\% \\
rcmain3&2102  &840 &714 &  85\% \\
\hline
i2c\_pca\_isa\_1&321  &33  &33  & 100\% \\
i2c\_pca\_isa\_2&341  &210 &110 &  52\% \\
i2c\_pca\_isa\_3&363  &434 &240 &  55\% \\
i8xx\_tco\_1&757      &14  &12  &  85\% \\
i8xx\_tco\_2&949      &28  &20  &  74\% \\
i8xx\_tco\_3&944      &39  &33  &  84\% \\
wdt\_pci\_1&1239      &60  &40  &  66\% \\
wdt\_pci\_2&1290      &150 &82  &  54\% \\
wdt\_pci\_3&1339      &288 &139 &  48\% \\
\midrule
\textbf{Total} & 22,541 & 5,116 & 3,560 & 69\% \\

\bottomrule
\end{tabular}
}
\end{table}

Overall, our Datalog-based method for computing
the \textrel{MustNotReadFrom} relation helped remove 69\% of the
load-store pairs, which means the subsequent abstract interpretation
procedure only has to consider the remaining 31\% of the load-store
pairs.  This allows \Name{} to reach a fixed point not only quicker but also with
significantly more accurate results.

\section{Related Work}
\label{sec:relatedwork}

We have reviewed some of the most closely-related work.  In addition,
Min\'e~\cite{Mine17} proposed an abstract interpretation based
technique for proving the absence of data-races, deadlocks, and other
runtime errors in real-time software with dynamic priorities, which is
an extension of his prior work~\cite{Mine12abs,Mine14,Mine11} by adding priorities
while targeting the OSEK/AUTOSAR operating systems. Specifically, it
tracks the effectiveness of mutex, yield and scheduler state based on
execution traces to figure out reachability, while using priorities to
make the analysis more accurate. However, the technique may not be
efficient in terms of memory and speed since it needs to check all
mutex, yield, and scheduler state to determine spurious interference
through trace history. Furthermore, it has not been thoroughly evaluated on
practical benchmarks.

Schwarz and M\"{u}ller-Olm~\cite{Schwarz11} proposed a static analysis
technique for programs synchronized via the priority ceiling protocol.
The goal is to detect synchronization flaws due to concurrency induced
by interrupts, especially for data races and transactional behavior of
procedures.  However, it is not a general-purpose verification
procedure and cannot prove the validity of assertions.
Regehr et al.~\cite{RegehrRW05} proposed to use context-sensitive
abstract interpretation of machine code to guarantee stack-safety for
interrupt-driven programs.  
Kotker and Seshia~\cite{Kotker11} extended a timing analysis procedure
from sequential programs to interrupt-driven programs with a bounded
number of context switches.  As such, it does not analyze all 
behaviors of the interrupts. Furthermore, the user needs to come up
with a proper bound of the context switches and specify the 
arrival time for interrupts.

Wu et al.~\cite{WuWCDW13} leveraged (bound) model checking tools to
detect data-races in interrupt-driven programs.  Kroening et
al.~\cite{Kroening15} also improved the CBMC bounded model checker to support the
verification of interrupt-driven programs. However, they only search
for a bounded number of execution steps, and thus cannot prove the
validity of assertions.
Wu et al.~\cite{WuCMDW16} also proposed a source-to-source
transformation technique similar to Regehr~\cite{Regehr07}: it
sequentializes interrupt-driven programs before feeding them to
a verification tool. However, due to the bounded nature of the
sequentialization process, the method is only suitable for detecting
violations but not for proving the absence of such violations.

This limitation is shared by testing methods.  For example,
Regehr~\cite{Regehr05} proposed a testing framework that schedules the
invocation of interrupts randomly. Higashi and Inoue~\cite{Higashi10}
leveraged a CPU emulator to systematically trigger interrupts to
detect data races in interrupt-driven programs.  However, it may be 
practically infeasible to cover all interleavings of interrupts using
this type of techniques.

Wang et al.~\cite{Wang15,WangWYZL17} proposed a hybrid approach that combines
static program analysis with dynamic simulation to detect data races
in interrupt-driven programs.  Although the approach is useful for
detecting bugs, it cannot be used to obtain proofs, i.e., proofs that assertions
always hold.

There are also formal verification techniques for embedded software based on
model checking~\cite{Ivanci05b,Yang06,Wang06,WangH14,GuoWW17,Schlich09,Vortler15}.
For example, Schlich and Brutschy \cite{Schlich09} proposed the
reduction of interrupt handler points based on partial order reduction
when model checking embedded software.
V{\'o}rtler et al.~\cite{Vortler15} proposed, within the Contiki
system, a method for modeling interrupts at the level of 
hardware-independent C source code and a new modeling approach for
periodically occurring interrupts.  Then, they verify programs with
interrupts using CBMC, which is again a \emph{bounded} model checker.
This means the technique is also geared toward detecting bugs and thus cannot 
prove properties.  Furthermore, since it models the periodical
interrupt invocation only, the approach cannot deal with non-periodic
invocations.

Datalog-based analysis techniques have been widely
used in testing and
verification~\cite{WhaleyL04,LamWLMACU05,LivshitsL05,NaikAW06,BravenboerS09,FarzanK12,GuoKWYG15,GuoKW16},
but none of the prior techniques was designed for analyzing the
interleaving behavior of interrupts. For example, Kusano and
Wang~\cite{KusanoW16,KusanoW17} used Datalog to obtain
flow-sensitivity in threads to improve the accuracy of thread-modular
abstract analysis for concurrent programs.  Sung et al.~\cite{Sung16}
used Datalog-based static analysis of HTML DOM events to speed up a
testing tool for JavaScript based web applications.  However, the
Datalog rules used by these prior techniques cannot be used to reason
about the behavior of interrupts with priorities.

\section{Conclusions}
\label{sec:conclusion}

We have presented an \emph{abstract interpretation} framework
for static verification of interrupt-driven software.  It first
analyzes each individual handler function in isolation and
then propagates the results to other handler functions.  To filter
out the infeasible data flows, we have also developed a
constraint-based analysis of the scheduling semantics of interrupts
with priorities.  It relies on constructing and solving a system of
Datalog constraints to decide whether a set of data flow pairs may
co-exist.  We have implemented our method in a software tool and
evaluated it on a large set of interrupt-driven programs.  Our
experiments show the new method not only is efficient but also
significantly improves the accuracy of the results compared to
existing techniques.  More specifically, it
outperformed both iCBMC, a bounded model checker, and the state-of-the-art
abstract interpretation techniques.

\section{Acknowledgments}

This material is based upon research supported in part by the U.S.\
National Science Foundation under grants CCF-1149454 and CCF-1722710
and the U.S.\ Office of Naval Research under award number
N00014-17-1-2896.

\newpage

\bibliographystyle{plain}
\bibliography{intAbs}

\end{document}